\newtheorem{theorem}{Theorem}
\newtheorem{corollary}{Corollary}
\def\tr{{\rm tr \,}}
\def\Tr{{\rm Tr \,}}
\def\R{{\mathbb R}}
\def\C{{\mathbb C}}
\def\bA{{\bold A}}
\def\bB{{\bold B}}
\def\bm{{\bold m}}
\def\br{{\bold r}}
\def\bs{{\bold s}}
\def\bx{{\bold x}}
\def\by{{\bold y}}
\def\bz{{\bold z}}
\def\bsigma{{\bold \sigma}}
\def\bnull{{\bold 0}}
\def\ri{{\mathrm{i}}}
\def\rd{{\mathrm{d}}}
\def\cC{{\mathcal C}}
\def\cI{{\mathcal I}}
\def\cJ{{\mathcal J}}
\def\cM{{\mathcal M}}
\def\cP{{\mathcal P}}
\def\cW{{\mathcal W}}
\def\b1{{\mathds 1}}
\def\spinup{{\uparrow }}
\def\spindown{{\downarrow }}
\newcommand{\bra}{\langle}
\newcommand{\ket}{\rangle}
\begin{document}

\title{$N$-representability in non-collinear spin-polarized density functional theory}
\author{David Gontier}
\affiliation{Universit\'e Paris Est, CERMICS (ENPC), INRIA, F-77455 Marne-la-Vall\'ee}

\begin{abstract}
The $N$-representability problem for non-collinear spin-polarized densities was left open in the pioneering work of von Barth and Hedin \cite{Barth1972} setting up the Kohn-Sham density functional theory for magnetic compounds. In this letter, we demonstrate that, contrarily to the non-polarized case, the sets of pure and mixed state $N$-representable densities are different in general. We provide a simple characterization of the latter by means of easily checkable necessary and sufficient conditions on the components $\rho^{\alpha \beta} (\br)$ of the spin-polarized density.

\end{abstract}
\maketitle


Since the work of Hohenberg and Kohn \cite{Hohenberg1964}, density functional theory (DFT) has become a widely used tool for electronic structure calculation in solid state physics, quantum chemistry and materials science. In standard (spin-unpolarized) DFT, the main object of interest is the total electronic density $\rho$. However, in order to deal with spin magnetic effects, it is necessary to resort to spin-polarized density functional theory (SDFT) where the objects of interest are the spin-polarized densities $\rho^{\alpha \beta}$ with $\alpha, \beta \in \{ \spinup, \spindown\}$. This theory was first developed by von Barth and Hedin  \cite{Barth1972} in a very general setting, but most applications use a restricted version of it, where local magnetization is constrained along a fixed direction (collinear spin-polarized DFT). While this simplified version is able to account for many magnetic effects, it misses some important physical behaviors (frustrated solids like $\gamma$-Fe or spin dynamics for instance). Actually, the first calculations for non-collinear spin-polarized DFT  have been performed by Sandratskii and Guletskii \cite{Sandratskii1986} and Kübler \textit{et al.} \cite{Kubler1988, Kubler1988_1} (see \cite{Bulik2013} or \cite{Sharma2007} for some recent works), but no rigorous mathematical background has yet been developed in this case. We provide in this letter a complete characterization of the set of admissible spin-polarized densities used to perform self-consistent minimizations.\\
\indent We emphasize that SDFT deals with spin effects, but not with orbital magnetic effects. If the latter are not negligible, we should use another variant of DFT, namely current -spin- density functional theory (C-S-DFT). We refer to \cite{Lieb2013} for some recent results on the $N$-representability problem in CDFT.\\
\indent Let us now focus on SDFT. Recall that the set of admissible antisymmetric wave functions is
\begin{equation*}
	\cW_N := \left\{ \Psi \in \bigwedge_{i=1}^N H^1(\R^3, \C^2), \quad \| \Psi \|_{L^2} = 1 \right\} ,
\end{equation*}
where $H^1(\R^3, \C^2)  := \big\{ f = (f^\spinup, f^\spindown) \ \text{with} \ \int | f^{\spinup / \spindown} |^2 < \infty  \ \text{and}  \  \int | \nabla f^{\spinup / \spindown} |^2 < \infty \big\}$ is the Sobolev space of one electron wave functions with finite kinetic energy. For $\Psi \in \cW_N$, the $N$-body density matrix is defined as $\Gamma_{\Psi} = | \Psi \ket \bra \Psi |$. The set of pure state $N$-body density matrices then is
\begin{equation*} 
	\cP_N := \{ \Gamma_{\Psi}, \ \Psi \in \cW_N \}.
\end{equation*}
We also introduce the set of mixed states, which is the convex hull of $\cP_N$:
\begin{align} \!\!\!
	\cM_N := & \Big\{ \Gamma = \sum p_n | \Psi_n \ket \bra \Psi_n |, \Psi_n \in \cW_N, \nonumber p_n \ge 0,\sum p_n = 1 \Big\}.
\end{align}
\indent The ground state energy of a system described by an $N$-body Hamiltonian $H$ is given by
\begin{equation*}
	\inf\limits_{\Gamma \in X}  \Tr(H \Gamma)
\end{equation*}
where $X$ represents either the set of pure or mixed states. Let $(V, \bB)$ be respectively the external electric potential and the magnetic field. In SDFT, the vector potential $\bA$ is usually negligible. Writing $W := V \mathbb{I}_2 + \bsigma \cdot \bB := \left( w^{\alpha \beta}(\bx) \right)_{\alpha, \beta \in \{ \spindown \spinup \}}$, where $\bsigma$ is the vector of Pauli matrices, simple calculations lead to \cite{Barth1972}
\begin{equation*}
	\Tr \left( H (V, \bB) \Gamma \right) = \Tr \left( H_0 \Gamma \right) + \int \sum_{\alpha \beta \in \{ \spinup, \spindown\}^2} w_{\alpha \beta} \rho_{\Gamma}^{\alpha\beta} ,
\end{equation*}
where $\rho_{\Gamma}^{\alpha \beta}$ are the spin-polarized densities: 
\[
	\rho_{\Gamma}^{\alpha \beta}(\bx) := N \sum_{\bs \in \{ \spinup, \spindown \}^{N-1}} \int_{\R^{3(N-1)}} \Gamma(\bx, \alpha, \bz, \bs; \bx, \beta, \bz, \bs) \ \rd \bz ,
\]
and $H_0 := (\sum_i - \frac12 \Delta_i + \sum_{i<j} | \bx_i - \bx_j|^{-1} ) \mathbb{I}_2$ contains the kinetic and interaction energies of the electrons.
We introduce $R_\Gamma(\bx)~:=~\left( \rho_\Gamma^{\alpha \beta}(\bx) \right)_{\alpha, \beta \in \{ \spindown \spinup \}}$,  the matrix of spin densities. Notice that $R_{\Gamma}$ and $W$ are fields of Hermitian matrices. Following Levy \cite{Levy1979} and Lieb \cite{Lieb1983}, we write
\begin{align*} \label{FLL} \! \! \! 
	\inf\limits_{\Gamma \in X}  \Tr(H(V, \bB) \Gamma) & = \!\!\! \inf\limits_{R \in \cJ_N(X)} \inf\limits_{\Gamma \to R} \Tr(H_0 \Gamma) +\!\! \int \! \tr_{\C^2} (W^* R) \nonumber \\
	& = \inf_{R \in \cJ_N(X)} \left\{ F(R) + \int \tr_{\C^2} (W^* R) \right\},
\end{align*}
with $F(R) := \inf_{\Gamma \to R} \Tr(H_0 \Gamma)$. In order to perform this minimization, it is essential to first describe the minimization set $\cJ_N(X) := \{ R_{\Gamma}, \ \Gamma \in X\}$, which is the so-called set of $N$-representable pure or mixed state spin-polarized densities. The question of $N$-representability then is: \\

\textit{Do we have an explicit form of the set $\cJ_N(X)$?} \\
\newpage
\indent In the spinless case, which amounts to setting $\bB = \bnull$, it holds $\int \tr_{\C^2} (W^* R_\Gamma) = \int V \rho_\Gamma$ with $\rho_\Gamma = \rho_\Gamma^{\spinup \spinup} + \rho_{\Gamma}^{\spindown \spindown}$. Hence it is sufficient to characterize $\cI_N(X) = \{ \rho_{\Gamma}, \ \Gamma \in X\}$. This problem was first considered by Gilbert \cite{Gilbert1975} and completely solved by Harriman \cite{Harriman1981}. He proved that $\cI_N(\cP_N) = \cI_N(\cM_N) := \cI_N$ with
\begin{equation} \label{I_N} \! \!
	\cI_N = \left\{ \rho \in L^1(\R^3), \rho \ge 0, \int_{\R^3} \rho = N, \sqrt{\rho} \in H^1(\R^3) \right\} . \!\!\!\!\!
\end{equation}
A rigorous mathematical construction of DFT was then developed by Lieb in \cite{Lieb1983}.\\
\indent In the spin-polarized setting, unlike the previous case, we have to distinguish pure state representability from mixed state representability, as is illustrated by the following example. Let $N=1$ and $\Psi = (\psi^\spinup, \psi^\spindown) \in \cW_1$. For $\Gamma = |\Psi \ket \bra \Psi |$, it holds $\rho^{\alpha \beta}_\Gamma (\bx) = \psi^\alpha (\bx) \overline{\psi^\beta}(\bx)$, so that the determinant of $R_{\Gamma}$ is null. Therefore, $\cJ_1(\cP_N)$ only contains fields of at most rank-1 matrices, whereas, as will be proved latter, $\cJ_1(\cM_N)$ contains full-rank matrices.\\
\indent Notice that, because the map $\Gamma \to \rho_\Gamma$ is linear and $\cM_N$ is the convex hull of $\cP_N$, it holds that $\cJ_N(\cM_N)$ is the convex hull of $\cJ_N(\cP_N)$. In this letter, we fully describe the mixed state $N$-representable spin-polarized densities $\cJ_N := \cJ_N(\cM_N)$. The proof heavily relies on the convexity of this set.\\
\indent We now state the main theorem of this article. We first recall that for an Hermitian matrix $R$ satisfying $R \ge 0$, $\sqrt{R}$ is a well-defined Hermitian matrix. We also recall the definition of the Lebesgue spaces $L^p(\R^d) := \{ f, \int_{\R^d} f^p < \infty \}$ and of the Sobolev spaces $W^{1, p}(\R^d) := \{ f \in L^p(\R^d), \nabla f \in L^p(\R^d) \}$.
\begin{theorem}
~\\
\noindent 1) The set of mixed state $N$-representable spin-polarized densities can be characterized as
\begin{align} \label{J_N} \! \! \! \! 
	\cJ_N & = \Big\{ R \in \cM_{2 \times 2}(L^1(\R^3)), \ R^* = R, \ R \ge 0 \nonumber \\
		&  \int_{\R^3} \tr_{\C^2}(R(\bx)) \rd^3 \bx = N, \ \sqrt{R} \in \cM_{2 \times 2}(H^1(\R^3)) \Big\} . \! \!
\end{align}

\noindent 2) More explicitly, $R := \begin{pmatrix} \rho^\spinup & \sigma \\ \sigma^* & \rho^\spindown \end{pmatrix}$ is a mixed state $N$-representable spin-polarized density if and only if
\begin{equation} \label{conditions}\! 
\left\{ \begin{aligned}
	& \rho^{\spinup/\spindown} \ge 0,\quad \rho^\spinup \rho^\spindown - | \sigma |^2 \ge 0, \quad \int \rho^\spinup + \int \rho^\spindown = N, \\
	& \sqrt{\rho^{\spinup/ \spindown}} \in  H^1(\R^3), \quad \sigma, \sqrt{\det} \in W^{1, 3/2}(\R^3), \\
	& | \nabla \sigma |^2 \rho^{-1}  \in L^1(\R^3), \\
	& \left| \nabla \sqrt{\det(R)} \right|^2 \rho^{-1} \in  L^1(\R^3) .
	\end{aligned} \right.
\end{equation}
\end{theorem}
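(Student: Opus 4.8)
The plan is to prove the two inclusions defining \eqref{J_N} and then to translate the compact condition $\sqrt{R}\in\cM_{2\times 2}(H^1)$ into the explicit system \eqref{conditions}. Write $\mathcal{K}_N$ for the right-hand side of \eqref{J_N}. The organizing idea is to identify $R(\bx)$ with the spin-block of the diagonal of the one-particle reduced density matrix $\gamma_\Gamma$ of $\Gamma$, acting on $L^2(\R^3,\C^2)$; by the standard ensemble $N$-representability of one-particle density matrices (Coleman), as $\Gamma$ ranges over $\cM_N$ with finite kinetic energy, $\gamma_\Gamma$ ranges exactly over operators with $0\le\gamma\le 1$, $\Tr\gamma=N$ and $\Tr(-\tfrac12\Delta\,\gamma)<\infty$. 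This reduces the $N$-body problem to a one-body problem.

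For necessity ($\cJ_N\subseteq\mathcal{K}_N$), let $\Gamma\in\cM_N$ and $R=R_\Gamma$. Hermiticity and $R(\bx)\ge 0$ follow from $\gamma_\Gamma\ge 0$, since a pointwise diagonal block of a positive operator is positive semidefinite, and $\int\tr_{\C^2}R=\Tr\gamma_\Gamma=N$ is the normalization. The one substantial point is $\sqrt{R}\in\cM_{2\times 2}(H^1)$, for which I would establish a matrix Hoffmann--Ostenhof inequality $\int\tr_{\C^2}\big(|\nabla\sqrt{R}|^2\big)\le C\,\Tr(-\tfrac12\Delta\,\gamma_\Gamma)$, bounding the gradient of the operator-monotone square root of the density by the kinetic energy of $\gamma_\Gamma$; operator monotonicity of $t\mapsto\sqrt{t}$ and an integral representation of the square root are the natural tools.

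For sufficiency ($\mathcal{K}_N\subseteq\cJ_N$), set $M:=\sqrt{R}\in\cM_{2\times 2}(H^1)$, write it by its columns $M=(m_1,m_2)$ with $m_j\in H^1(\R^3,\C^2)$, and note $R=MM^*=m_1 m_1^*+m_2 m_2^*$ with $\int(|m_1|^2+|m_2|^2)=\int\tr_{\C^2}R=N$. For $N=1$ this already produces a rank-$\le 2$ mixed one-particle state $\Gamma=\sum_j p_j|\Psi_j\ket\bra\Psi_j|$ with $\Psi_j=m_j/\sqrt{p_j}$, $p_j=\int|m_j|^2$, which explains why $\cJ_1(\cM_N)$ contains full-rank matrices even though each $m_j m_j^*$ is rank one. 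For general $N$ I would spread this rank-two structure over $N$ orthonormal spatial modes by a Harriman--Lieb phase construction applied to $M$, producing a finite family of $H^1$ spinors whose associated $\gamma$ satisfies $0\le\gamma\le 1$, $\Tr\gamma=N$, has diagonal $R$ and finite kinetic energy. Convexity of $\cJ_N$ (being the convex hull of $\cJ_N(\cP_N)$) is what allows the mixing of these pieces, and after checking that $\mathcal{K}_N$ is itself convex via joint convexity of $R\mapsto\int\tr_{\C^2}|\nabla\sqrt{R}|^2$, one can reduce the construction to a generating subset of smooth, compactly supported densities and pass to the whole set by density.

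Finally, the passage from \eqref{J_N} to \eqref{conditions} is matrix analysis based on the closed form $\sqrt{R}=\big(R+\sqrt{\det R}\,\mathbb{I}_2\big)/\sqrt{\tr_{\C^2}R+2\sqrt{\det R}}$, valid for $2\times 2$ Hermitian $R\ge 0$. The conditions $\rho^{\spinup/\spindown}\ge 0$, $\det R\ge 0$ and the trace condition encode exactly $R\ge 0$ and $\int\tr_{\C^2}R=N$; substituting $\det R=\rho^\spinup\rho^\spindown-|\sigma|^2$ and differentiating the formula shows that $\sqrt{R}\in H^1$ is equivalent to the simultaneous memberships $\sqrt{\rho^{\spinup/\spindown}}\in H^1$, $\sigma,\sqrt{\det}\in W^{1,3/2}$ together with the two weighted bounds $|\nabla\sigma|^2\rho^{-1},\,|\nabla\sqrt{\det R}|^2\rho^{-1}\in L^1$. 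I expect the main obstacle to lie precisely here and in the construction: $\sqrt{R}$ fails to be smooth where the eigenvalues of $R$ cross or where $\rho=\tr_{\C^2}R$ vanishes, and controlling $\nabla\sqrt{R}$ on this degenerate set—which is what forces the weaker $W^{1,3/2}$ integrability and the $\rho^{-1}$-weighted $L^1$ conditions for the off-diagonal and determinant parts rather than plain $H^1$—is the delicate step, as is verifying that these exact conditions are what the Harriman--Lieb construction needs to output finite-kinetic-energy spinors.
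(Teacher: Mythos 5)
Your outline shares the paper's skeleton (Coleman's reduction to one-body density matrices, the closed form $\sqrt{R}=(R+\sqrt{\det R}\,\mathbb{I}_2)(\tr_{\C^2}R+2\sqrt{\det R})^{-1/2}$, Harriman--Lieb phases, convexity of $\cJ_N$), and the translation of \eqref{J_N} into \eqref{conditions} is essentially the paper's Step 2. But two of your steps are asserted rather than proved, and one of them hides the actual difficulty. For necessity, the paper does not prove a ``matrix Hoffmann--Ostenhof inequality''; it derives the scalar estimates $|\nabla\sqrt{\rho^\alpha}|^2\le\sum n_k|\nabla\phi_k^\alpha|^2$, $|\nabla\sigma|^2\rho^{-1}\le\sum n_k(|\nabla\phi_k^\spinup|^2+|\nabla\phi_k^\spindown|^2)$ and the analogous bound on $|\nabla\sqrt{\det R}|^2\rho^{-1}$ entry by entry from Coleman's expansion via Cauchy--Schwarz, and only then converts them into $\sqrt{R}\in\cM_{2\times2}(H^1)$ through the explicit square-root formula. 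Your proposed lemma is exactly the content of that step, so invoking it is circular as written; it needs a proof (the paper's route shows one).

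The genuine gap is in sufficiency. Writing $M=\sqrt{R}$ with columns $m_1,m_2$ and applying Harriman phases to both columns does not work: since $M$ is Hermitian, $m_2(\bx)^*m_1(\bx)=(M^*M)_{21}(\bx)=\sigma^*(\bx)$, so the two columns are not pointwise orthogonal in $\C^2$ unless $\sigma\equiv 0$, the family $\{m_j e^{2\ri\pi k f}\}$ is not orthonormal, and the resulting $\gamma$ need not satisfy $\gamma\le 1$. The paper's resolution is precisely what your sketch skips: first decompose $R$ into two \emph{null-determinant} pieces $\tilde R^{\spinup},\tilde R^{\spindown}$ built from the columns of $\sqrt{R}$ (Step 5); then represent a rank-one field by a \emph{single} spinor times Harriman phases, which forces the choice $\phi^\spinup=\sigma/\sqrt{\rho^\spindown}$, $\phi^\spindown=\sqrt{\rho^\spindown}$ and hence requires controlling the phase of $\sigma$ where $\rho^\spindown$ is small — the paper flags this as ``a real mathematical difficulty'' and handles it only under $\rho^\spinup\le 2\rho^\spindown$ (Step 3), removing that restriction by a spatial cutoff in $\rho^\spinup/\rho^\spindown$ and convexity of $\cJ_N$ (Step 4). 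Nothing in your proposal addresses this phase problem. Finally, your closing ``reduce to smooth compactly supported densities and pass to the limit'' is unsupported: it would require $\cJ_N$ to be closed in a topology in which such densities are dense in the right-hand side of \eqref{J_N}, which you do not establish; the paper's construction is entirely explicit and needs no approximation argument.
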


\indent
The first line of $(3)$ simply states that $R$ must be a positive Hermitian matrix and that the number of electrons is $N$. The other three lines are regularity conditions that ensure the finiteness of the kinetic energy. Comparing (\ref{I_N}) and (\ref{J_N}), we see that the above theorem is a natural and nice extension of the classical $N$-representability result. \\
\indent
An interesting consequence of our result is that it is possible to control the eigenvalues of $R$. Most applications of SDFT use exchange correlation functionals of the form $E_{\rm{xc}}(\rho, | \bm |)$ where $\bm = \bsigma \cdot R$. This is the case for local functionals, due to rotational invariance. If $\rho^+$ and $\rho^-$ are the eigenvalues of $R$, we can write $E_{\rm{xc}}(\rho, | \bm |) = \widetilde E_{\rm{xc}} (\rho^+, \rho^-)$. Actually, most of the functionals can be intrinsically written in this latter form since they are extensions of the spin-unpolarized case. 
\begin{corollary}
	If $R$ is representable, then its two eigenvalues $\rho^+$ and $\rho^-$ satisfy $\sqrt{\rho^\pm} \in H^1(\R^3)$.\\
\end{corollary}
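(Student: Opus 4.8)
The plan is to read off the regularity of the eigenvalues directly from the conclusion $\sqrt{R} \in \cM_{2 \times 2}(H^1(\R^3))$ of the Theorem, thereby reducing the statement to the purely algebraic fact that the eigenvalues of a $2\times 2$ Hermitian matrix field with $H^1$ entries are themselves in $H^1(\R^3)$.

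First I would use that, since $R \ge 0$, the positive square root $S := \sqrt{R}$ is a well-defined Hermitian matrix field sharing its eigenvectors with $R$ and having eigenvalues $\sqrt{\rho^+}$ and $\sqrt{\rho^-}$. It therefore suffices to show that the eigenvalues of $S$ lie in $H^1(\R^3)$. Writing $S(\bx) = \begin{pmatrix} a & b \\ \bar b & c \end{pmatrix}$, where $a,c$ are real-valued and $b$ is complex-valued, all four entries being in $H^1(\R^3)$ by the Theorem, the eigenvalues are $\lambda_\pm = \tfrac{a+c}{2} \pm D$ with $D := \sqrt{\left(\tfrac{a-c}{2}\right)^2 + |b|^2}$.

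Since $\tfrac{a+c}{2} \in H^1(\R^3)$, the whole problem reduces to proving $D \in H^1(\R^3)$. I would rewrite $D = |v|$ as the Euclidean norm of the three-component real vector field $v := \big(\tfrac{a-c}{2},\, b_1,\, b_2\big)$, where $b_1,b_2$ are the real and imaginary parts of $b$, so that each component of $v$ lies in $H^1(\R^3)$. Then $D^2 = |v|^2 \in L^1(\R^3)$ yields $D \in L^2(\R^3)$, and for the gradient I would invoke the chain rule for the composition of the $1$-Lipschitz map $y \mapsto |y|$ with the Sobolev field $v$. This produces the pointwise bound $|\nabla D| \le |\nabla v|$ almost everywhere, whence $\int |\nabla D|^2 \le \int |\nabla v|^2 = \sum_i \int |\nabla v_i|^2 < \infty$. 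Combining the two estimates gives $D \in H^1(\R^3)$, hence $\lambda_\pm \in H^1(\R^3)$ and therefore $\sqrt{\rho^\pm} = \lambda_\pm \in H^1(\R^3)$.

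The only delicate point, and essentially the sole obstacle, is this chain-rule step: the map $y \mapsto |y|$ fails to be differentiable at the origin, so the bound $|\nabla D| \le |\nabla v|$ must also be justified on the set $\{v = 0\}$. This is handled by the standard Stampacchia-type result that post-composition with a Lipschitz function sends $H^1$ into $H^1$ with the expected almost-everywhere gradient estimate; equivalently, one regularizes $|\cdot|$ to $y \mapsto \sqrt{|y|^2 + \varepsilon}$, differentiates, and lets $\varepsilon \to 0$, using that $\nabla v = 0$ almost everywhere on $\{v=0\}$. Everything else in the argument is elementary algebra.
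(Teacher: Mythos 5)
Your proof is correct and follows essentially the same route as the paper: both pass to $\sqrt{R}$, whose entries are in $H^1(\R^3)$ by the Theorem, write the eigenvalues via the quadratic formula, and reduce everything to showing that the square root of the discriminant lies in $H^1(\R^3)$. The only difference is cosmetic: the paper justifies that last step by the convexity of $\rho \mapsto \| \nabla \sqrt{\rho}\|_{L^2}^2$, whereas you invoke the Lipschitz chain rule for $v \mapsto |v|$ — both amount to the same pointwise Cauchy--Schwarz estimate, and your handling of the set $\{v=0\}$ by regularization is the standard, correct justification.
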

\begin{proof}[Proof of Theorem 1]
~\\
\indent Let $\cC_N$ be the set on the right hand-side of (\ref{J_N}) and $\cC_{N, 0}$ (resp. $\cJ_{N,0}$) the subset of $\cC_N$ (resp. $\cJ_N$) of matrices of null determinant. The structure of the proof is as follows. We want to show that $\cJ_N = \cC_N$.  We first prove that any $R \in \cJ_N$ satisfies~(\ref{conditions}) and that $R$ satisfies (\ref{conditions}) if and only if $R$ is in $\cC_N$. This proves that $\cJ_N \subset \cC_N$. To obtain the other inclusion, we show that $\cC_{N,0} = \cJ_{N,0}$ using Slater determinants and convexity, and conclude  by using again the convexity of~$\cJ_N$.\\
Throughout the proof, we denote by $R := \begin{pmatrix} \rho^\spinup & \sigma \\ \sigma^* & \rho^\spindown \end{pmatrix}$ the spin-polarized density and by $\rho := \rho^\spinup + \rho^\spindown$ the total electronic density. \\
\indent \textbf{Step 1:} \underline{Any  $R \in \cJ_N$ satisfies (\ref{conditions})}. \\
For a mixed state $\Gamma \in \cM_N$, we can define the one-body spin density matrix, which has 4 components:
\begin{equation*}
	\gamma_{\Gamma}^{\alpha \beta} (\bx, \by) :=  N \sum_{\bs \in \{ \spinup, \spindown \}^{N-1}} \int_{\R^{3(N-1)}} \Gamma(\bx, \alpha, \bz, \bs; \by, \beta, \bz, \bs) \ \rd \bz .
\end{equation*}
Coleman \cite{Coleman1963} proved that any such $\gamma$ can be written as
\begin{align*} 
 	\gamma^{\alpha \beta}(\bx, \by) & = \sum_{k=1}^\infty n_k \phi_k^\alpha(\bx) \overline{\phi_k^\beta(\by)},  \quad 0 \le n_k \le 1,  \sum_{k=1}^\infty n_k = N, \nonumber \\
	 & \bra \Phi_k | \Phi_l \ket = \delta_{kl}, \ \Tr(-\Delta \gamma) := \sum_{k=1}^\infty n_k \| \nabla \Phi_k \|^2 < \infty  .
\end{align*}
Let $R$ be in $\cJ_N$. By definition, there exists $\gamma$ satisfying the above conditions such that $R = R_{\gamma}$. The first line of (\ref{conditions}) is obvious. Then, because all elements of $R$ are of the form $\sum n_k \phi_k^\alpha (\bx) \phi_k^\beta(\bx)$ with $\sum n_k \| \nabla \phi_k^{\sigma} \|^2 < \infty$, we easily deduce from the Sobolev embedding that $R \in W^{1, 3/2}(\R^3)$.

Moreover, using the Cauchy-Schwarz inequality, it follows
\begin{align*}
	| \nabla \rho^\alpha |^2 & = 4 \left( \sum_{k=1}^\infty n_k \rm{Re} \left( \phi_k^\alpha \overline{\nabla \phi_k^\alpha} \right)\right)^2 \nonumber \\
	& \le 4 \left( \sum_{k=1}^\infty n_k |\phi_k^\alpha |^2 \right) \left( \sum_{k=1}^\infty n_k | \nabla \phi_k^\alpha |^2 \right) ,
\end{align*}
so that $| \nabla \sqrt{\rho^\alpha} |^2 \le 4 \sum n_k | \nabla \phi_k^\alpha |^2$ (we recall that for $f \ge 0$, it holds $| \nabla f |^2 = 4 f | \nabla \sqrt f |^2$). Integrating this relation gives $\| \nabla \sqrt{\rho^\alpha} \|_{L^2}^2 \le \Tr(-\Delta \gamma^{\alpha \alpha}) < \infty$. Likewise,
\begin{align*}
	\hspace{-0.5em} 
	| \nabla \sigma |^2 & = \left| \sum_{k=1}^\infty n_k \left( \nabla \phi_k^\spinup \overline{\phi_k^\spindown} + \phi_k^\spinup\overline{\nabla \phi_k^\spindown} \right) \right|^2 \nonumber \\
	& \le  \left| \sum_{k=1}^\infty n_k \left( |\phi_k^\spinup|^2 + | \phi_k^\spindown|^2 \right)^{1/2}   \left(| \nabla \phi_k^\spinup|^2 + | \nabla \phi_k^\spindown|^2 \right)^{1/2} \right|^2 \nonumber \\
	& \le \rho  \left( \sum_{k=1}^\infty n_k \left(| \nabla \phi_k^\spinup|^2 + | \nabla \phi_k^\spindown|^2 \right) \right) ,
\end{align*}
so that $| \nabla \sigma |^2 \rho^{-1} \le \sum n_k (|\nabla \phi_k^\spinup|^2 + |\nabla \phi_k^\spindown|^2)$. Integrating this relation gives $\| |\nabla \sigma |^2 \rho^{-1} \|_{L^1} \le \Tr(-\Delta \gamma)< \infty$. Finally, using some lengthy yet straightforward calculations, we can write $\det(R) = \rho^\spinup \rho^\spindown - | \sigma |^2$ as
\begin{equation*}
	\det(R) = \sum_{k=1}^\infty \sum_{l=1}^\infty n_k n_l |\phi_k^\spinup \phi_l^\spindown - \phi_k^\spindown \phi_l^\spinup |^2.
\end{equation*}
Using similar arguments as before, we obtain that $\sqrt{\det} \in W^{1,3/2}(\R^3)$ and
\begin{equation*}
	| \nabla \det(R)|^2 \le 16 \det(R) \rho \sum_{k=1}^\infty n_k \left(| \nabla \phi_k^\spinup|^2 + | \nabla \phi_k^\spindown|^2 \right).
\end{equation*}
Integrating this inequality leads to $\int | \nabla \sqrt{ \det(R)} |^2 \rho^{-1} \le 4 \Tr(- \Delta \gamma) < \infty$. Therefore, any $R \in \cJ_N$ satisfies~(\ref{conditions}). \\
\indent \textbf{Step 2:} \underline{$R \in \cC_N$ if and only if $R$ satisfies (\ref{conditions})}. \\
Let $R$ be a matrix satisfying (\ref{conditions}), and let $\det := \det(R)$. Writing
\begin{equation} \label{A}
	\sqrt{R} := \begin{pmatrix}
		r^\spinup & s \\ s^* & r^\spindown
	\end{pmatrix},
\end{equation}
the equation $R = \sqrt{R} \sqrt{R}$ is equivalent to
\begin{equation} \label{Acond}
	\left\{ \begin{array}{lll}
		|r^\spinup |^2 + | s |^2 & = & \rho^\spinup , \\
		|r^\spindown |^2 + | s |^2 & = & \rho^\spindown , \\
		s(r^\spinup + r^\spindown) & = & \sigma. \\
	\end{array} \right.
\end{equation}
Together with the relation $\det(\sqrt{R}) = r^\spinup r^\spindown - | s |^2 = \sqrt{\det}$, this leads to
\begin{equation*}
	\left\{ \begin{array}{lll}
		r^\spinup & = & (\rho^\spinup + \sqrt{\det}) (\rho + 2 \sqrt{\det})^{-1/2} , \\
		r^\spindown  & = & (\rho^\spindown + \sqrt{\det}) (\rho + 2 \sqrt{\det})^{-1/2} , \\
		s & = & \sigma (\rho + 2 \sqrt{\det})^{-1/2}.
		\end{array} \right.
\end{equation*}
Let us show for instance that $r^\spinup \in H^1(\R^3)$, the other cases being similar. Using the inequalities $(a+b)^2 \le 2(a^2 + b^2)$, $\rho \ge \rho^\spinup$ and $\det \ge 0$, it holds
\begin{align*}
	\left| \nabla r^\spinup \right|^2 & \le 2 \dfrac{(\nabla \rho^\spinup + \nabla \sqrt{\det})^2}{\rho + 2 \sqrt{\det}} \nonumber \\
	& \quad + 2\dfrac{(\rho + \sqrt{\det})^2(\nabla \rho + 2 \nabla \sqrt{\det})^2}{(\rho + 2 \sqrt{\det})^3} \\
	& \le 4 \left( \dfrac{| \nabla \rho^\spinup|^2}{\rho^\spinup} + \dfrac{|\nabla \sqrt{\det}|^2}{\rho} +\dfrac{|\nabla \rho|^2}{\rho} + \dfrac{|\nabla \sqrt{\det}|^2}{ \rho} \right) .
\end{align*}
Every term of the right-hand side is in $L^1$ according to (\ref{conditions}). Note that, for the third term, we used the fact that $\rho = (1/2)  (2 \rho^\spinup + 2 \rho^\spindown)$ is a convex combination of two elements satisfying $\| \nabla \sqrt{\rho^\alpha} \|_{L^2} < \infty$, and that the functional $\| \nabla \sqrt \cdot \|_{L^2}^2$ is convex. Reciprocally, using (\ref{Acond}), it is easy to see that every $R \in \cC_N$ satisfies (\ref{conditions}). Altogether, we proved that $R \in \cC_N$ if and only if $R$ satisfies (\ref{conditions}). \\
\indent At this point, we proved that $\cJ_N \subset \cC_N$. To show the other inclusion, we start with matrices of null determinant. We already know that $\cJ_{N,0} \subset \cC_{N,0}$. To prove the converse, we use the convexity of $\cJ_{N}$.\\
\indent \textbf{Step 3:} \underline{If $R \in \cC_{N,0}$ satisfies $\rho^\spinup \le 2 \rho^\spindown$, then $R \in \cJ_{N,0}$}.
Let $R$ be in $\cC_{N,0}$, so that $| \sigma |^2 = \rho^\spinup \rho^\spindown$. We assume that $\rho^\spinup \le 2 \rho^\spindown$. This point is of importance, for there is a real mathematical difficulty in controlling the phase of $\sigma$ in the general case. We define $\phi^\spinup = \sigma / \sqrt{\rho^\spindown}$ and $\phi^\spindown = \sqrt{\rho^\spindown}$. Notice that $| \phi^\alpha | = \sqrt{\rho^\alpha}$ for $\alpha \in \{ \spinup, \spindown\}$.
We then consider the Slater determinant $\Psi = (N!)^{-1/2} \det (\Phi_k(\bx_l))_{1 \le k,l \le N}$ with
\begin{equation*}
	\Phi_k(\bx) := \dfrac{1}{\sqrt{N}}\begin{pmatrix}
		\phi^\spinup(\bx) \\
		\phi^\spindown(\bx)
		\end{pmatrix}
		\exp( 2 \ri  \pi  k f(x_1)) ,
\end{equation*}
where $f$ is defined similarly to \cite{Harriman1981, Lieb1983} by
\begin{equation} \label{f}
	f(x_1) := \int_{-\infty}^{x_1} \rd t \int_{\R^2} \rho(t, x_2, x_3) \  \rd x_2 \ \rd x_3.
\end{equation}
It is standard to prove that $\{\Phi_k\}_{1 \le k \le N}$ is orthonormal in $L^2(\R^3, \C^2)$. Also, by direct calculations, $R_\Psi = R$. Finally, we check that $\Phi_k \in H^1(\R^3)$. Using again the inequality $(a + b)^2 \le 2 (a^2 + b^2)$, we write for $\phi_k^\spinup$ (the calculations are similar for $\phi_k^\spindown$):
\begin{align*} 
	N | \nabla \phi_k^\spinup |^2 & \le \left| \dfrac{\sqrt{\rho^\spindown} \nabla \sigma - \sigma \nabla \sqrt{\rho^\spindown}}{\rho^\spindown} + \dfrac{\sigma}{\sqrt{\rho^\spindown}} 2 \ri \pi k \begin{pmatrix} f' \\ 0 \\ 0 \end{pmatrix} \right|^2 \\
\nonumber	& \le 2 \dfrac{| \nabla \sigma |^2}{\rho^\spindown} + 2 \dfrac{\rho^\spinup}{\rho^\spindown} | \nabla \sqrt{\rho^\spindown} |^2+ 4 \pi^2 k^2 \rho^\spinup | f'|^2 .
\end{align*}
Since by assumption $\rho^\spinup \le 2 \rho^\spindown$, it holds $\rho^\spinup / \rho^\spindown  \le 2$ and $1/\rho^\spindown \le 3 / \rho$, so that
\begin{equation} \label{phi_k^2}
	N | \nabla \phi_k^\spinup |^2 \le 6 \dfrac{| \nabla \sigma |^2}{\rho} + 4 | \nabla \sqrt{\rho^\spindown} |^2 + 4 \pi^2 k^2 \rho | f' |^2 .
\end{equation}
The first two terms are in $L^1(\R^3)$ because $R \in \cC_{N,0}$ satisfies (\ref{conditions}). 
To prove that the last term is also in $L^1(\R^3)$, we notice that (\ref{f}) leads to
\begin{align*}
	f'(x_1) & = \int_{\R^2} \rho(x_1, x_2, x_3) \ \rd x_2 \rd x_3 , \\
	f''(x_1) & = \int_{\R^2} 2 \dfrac{\partial \sqrt{\rho}}{\partial x_1} (x_1, x_2, x_3) \  \sqrt{\rho}(x_1, x_2, x_3) \ \rd x_2 \rd x_3.
\end{align*}
According to (\ref{conditions}) and the convexity of $\| \nabla \sqrt \cdot \|_{L^2}^2$, $\sqrt{\rho} \in H^1(\R^3)$. Hence $f' \in L^1(\R)$, $f'' \in L^1(\R)$, and finally $f' \in W^{1,1}(\R) \hookrightarrow L^3(\R)$. The last term of (\ref{phi_k^2}) becomes
\begin{equation*}
	\int_{\R^3} \rho(\bx) | f' (x_1)|^2 \ \rd^3 \bx = \int_{\R} | f'(x_1) |^3 \rd x_1 < \infty.
\end{equation*}
Hence $\phi_k^\alpha \in H^1(\R^3, \C)$ for $\alpha \in \{ \spinup, \spindown\}$, and $R \in J_{N,0}$. Actually, we even proved that $R$ is pure state representable (by a Slater determinant).\\
\indent \textbf{Step 4:} \underline{Any $R \in \cC_{N,0}$ is in $\cJ_{N,0}$}.\\
To extend the previous result to the whole set $\cC_{N,0}$, we use a space based decomposition. More specifically, consider $R \in \cC_{N,0}$, and $\chi \in \cC^\infty(\R^+, [0,1])$ satisfying
\begin{equation*} 
	\chi(x) = \left\{ \begin{array}{lll}
	0 & \text{if} &  x < \frac12, \\
		1 & \text{if} & x > 2.
		\end{array} \right.
\end{equation*}
Let $\tilde R_1 := \chi^2 \left( \rho^\spinup / \rho^\spindown\right)R$ and $\tilde R_2 :=  \left(1 - \chi^2 \left( \rho^\spinup / \rho^\spindown \right) \right) R$. 
We take $t := N^{-1} \int \tr_{\C^2}(\tilde R_1(\bx)) \ \rd^3 \bx \in [0,1]$ and finally introduce $R_1 := t^{-1} \tilde R_1$ and $R_2 := (1-t)^{-1} \tilde R_2$.
By construction, $R = \tilde R_1 + \tilde R_2 = t R_1 + (1-t) R_2$. Let us check that $R_1 \in \cC_{N,0}$ (the proof is similar for $R_2$). In the following, the subscript $1$ will be used for the elements of $R_1$. The first line of (\ref{conditions}) is easy to check. The last property is also satisfied, as $\det(R) \equiv 0$ by assumption. Let us now show that $\sqrt{\rho^{\spinup}_1} \in H^1(\R^3)$ (the proof being similar for the other quantities).  With the inequality $(a + b + c)^2 \le 3(a^2 + b^2 + c^2)$, it holds
\begin{align} \label{rho_1}
	\left| \nabla \sqrt{\rho^{\spinup}_1} \right|^2 & = \dfrac{1}{t^2} \Big| \chi' \left( \dfrac{\rho^\spinup}{\rho^\spindown} \right) \dfrac{\rho^\spindown \nabla \rho^\spinup - \rho^\spinup \nabla \rho^\spindown}{(\rho^\spindown)^2} \sqrt{\rho^\spinup}   \nonumber\\
		& \qquad +  \chi \left( \dfrac{\rho^\spinup}{\rho^\spindown} \right) \nabla \sqrt{\rho^\spinup}\Big|^2 \nonumber \\
		& \le \dfrac{3}{t^2} \Big(\chi' \left( \dfrac{\rho^\spinup }{ \rho^\spindown} \right)^2 \left[ \dfrac{\rho^\spinup |\nabla \rho^\spinup|^2}{(\rho^\spindown)^2} + \dfrac{(\rho^{\spinup})^3 | \nabla \rho^\spindown|^2}{(\rho^\spindown)^4} \right] \nonumber \\
		& \qquad + \chi \left( \dfrac{\rho^\spinup}{\rho^\spindown} \right)^2 | \nabla \sqrt{\rho^\spinup} |^2 \Big).
\end{align}
The last term is clearly integrable. By definition of~$\chi$, $\chi'(x) \neq 0$ if and only if $1/2 \le x \le 2$, so that the first term is not vanishing only under the condition $\rho^{\spindown}/2 \le \rho^\spinup \le 2 \rho^\spindown$. In this case, 
$\rho^\spinup / (\rho^\spindown)^2 \le 4 / \rho^\spinup$ and $(\rho^\spinup)^3/ (\rho^\spindown)^4 \le 8/\rho^\spindown$,
which allows to conclude to the integrability of the right-hand side of (\ref{rho_1}). Altogether, $R_1$ satisfies (\ref{conditions}), so is in $\cC_{N,0}$ according to Step 2. By construction, we also have $\rho^\spindown_1 \le 2 \rho^\spinup_1$, for $\chi(x) = 0$ if $x < 1/2$. Hence, according to Step 3, $R_1$ (respectively $R_2$) is representable, i.e. $R_1 \in \cJ_{N}$ and $R_2 \in \cJ_{N}$. By convexity of $\cJ_{N}$, we deduce that $R = t R_1 + (1-t) R_2 \in \cJ_{N}$. Moreover, because $\det(R) \equiv 0$, we even have $R \in \cJ_{N,0}$. Hence, $\cC_{N,0} \subset \cJ_{N,0}$, and finally, using the first two steps, $\cC_{N,0} = \cJ_{N,0}$.\\
\indent \textbf{Step 5:} \underline{Any $R \in \cC_N$ is in $\cJ_N$}\\
To conclude, we use again a convexity argument. We now decompose a matrix of $\cC_N$ as a convex combination of two matrices of $\cC_{N,0}$. More specifically, let $R$ be in $\cC_{N}$. We use the notation (\ref{A}) for $\sqrt{R}$, so that $r^\spinup$, $r^\spindown$ and $s$ are in $H^1(\R^3)$. According to (\ref{Acond}), we can write $R = R^\spinup + R^\spindown$ with
\begin{equation*}
	\tilde R^\spinup := \begin{pmatrix}
		| r^\spinup |^2 & {s} r^\spinup \\
		 s^* r^\spinup & | s |^2
	\end{pmatrix} 
	\quad \text{and} \quad
	\tilde R^\spindown := \begin{pmatrix}
		| s |^2 & s r^\spindown \\
		 s^* r^\spindown & | r^\spindown |^2
	\end{pmatrix}.
\end{equation*}
Notice that $\det(\tilde R^\alpha) \equiv 0$. Also, $\sqrt{\tilde R^\alpha} = (| r^\alpha |^2 + | s |^2)^{-1} {\tilde R^\alpha}$. With similar techniques as before, we can prove that $\sqrt{\tilde R^\alpha} \in \cM_{2 \times 2} (H^1(\R^3))$. Then, we introduce $t := N^{-1} \int \tr_{\C^2}( \tilde R^\spinup(\bx)) \rd^3 \bx \in [0,1]$, $R^\spinup := t^{-1} \tilde R^\spinup$ and $R^\spindown := (1-t)^{-1} \tilde R^\spindown$
so that $R^\alpha \in \cC_{N,0}$. Finally, $R = t R^\spinup + (1-t) R^\spindown$ is a convex combination of two elements of $\cC_{N,0}$. Because $\cC_{N,0} = \cJ_{N,0} \subset \cJ_{N}$ which is convex, $R \in \cJ_{N}$. \\
\indent We proved $\cJ_N \subset \cC_N$ and $\cC_N \subset \cJ_N$. Hence, $\cC_N = \cJ_N$, which concludes the proof.\\
\end{proof}

\begin{proof}[Proof of Corollary 1]
~\\
\indent With the notations (\ref{A}) for $\sqrt{R}$, $\sqrt{\rho^\pm}$ are the roots of $x \mapsto x^2 - (r^\spinup + r^\spindown) x + (r^\spinup r^\spindown - | s |^2)$. According to Theorem~1, $r^\spinup$, $r^\spindown$ and $s$ are in $H^1(\R^3)$. The discriminant of this polynomial can we written as $\Delta := (r^\spinup - r^\spindown)^2 + 4 | s |^2$. It is the sum of two quantities whose square roots are in $H^1(\R^3)$, so that $\sqrt{\Delta} \in H^1(\R^3)$ by convexity of $\| \sqrt \cdot \|^2_{L^2}$. Therefore, $\sqrt{\rho^\pm} = \left( r^\spinup + r^\spindown \pm \sqrt{\Delta}\right)/2 \in H^1(\R^3)$.
\end{proof}

\textit{Acknowledgments}\\
\indent I am very grateful to E. Cancès and G. Stoltz for their suggestions and help. This work was partially supported by the ANR MANIF.

\bibliography{NrepresentabilityForSDFT_revised}

\end{document}